\newtheorem{theorem}{Theorem}[section]
\newtheorem{corollary}[theorem]{Corollary}
\newtheorem{lemma}[theorem]{Lemma}
\newtheorem{remark}[theorem]{Remark}
\newtheorem{problem}[theorem]{Problem}
\newtheorem{definition}{Definition}[section]
\theoremstyle{definition}
\theoremstyle{problem}
\theoremstyle{conjecture}
\newcommand{\dst}{\displaystyle}
\newcommand{\TT}{\ensuremath{\mathbb{T}}}
\newcommand{\ZZ}{\ensuremath{\mathbb{Z}}}
\newcommand{\R}{\ensuremath{\mathbb{R}}}
\newcommand{\Co}{\ensuremath{\mathbb{C}}}
\def \C {\mathbb{C}}
\newcommand{\ac}{\ensuremath{\mathcal{A}}}
\newcommand{\bc}{\ensuremath{\mathcal{B}}}
\newcommand{\vb}{\ensuremath{\mathbf{v}}}
\newcommand{\eb}{\ensuremath{\mathbf{e}}}
\newcommand{\fb}{\ensuremath{\mathbf{f}}}
\newcommand{\ub}{\ensuremath{\mathbf{u}}}
\newcommand{\rb}{\ensuremath{\mathbf{r}}}
\newcommand{\cb}{\ensuremath{\mathbf{c}}}
\def \< {\langle}
\def \> {\rangle}
\newcommand{\ent}[1]{{\left[{#1}\right]}}
\newcommand{\abs}[1]{{\left|{#1}\right|}}
\newcommand{\scal}[1]{{\left\langle{#1}\right\rangle}}
\begin{document}

\title[A Fourier analytic approach to MUBs]{A Fourier analytic approach to the problem of mutually unbiased bases}

\author[M. Matolcsi]{M\'at\'e Matolcsi}
\address{M. M.: Alfr\'ed R\'enyi Institute of Mathematics,
Hungarian Academy of Sciences POB 127 H-1364 Budapest, Hungary
Tel: (+361) 483-8302, Fax: (+361) 483-8333}
\email{matomate@renyi.hu}

\thanks{M. Matolcsi was supported by the ERC-AdG 228005, and OTKA Grants No. K77748, K81658.}


\begin{abstract}
We give an entirely new approach to the problem of mutually unbiased bases (MUBs), based on a Fourier analytic technique in additive combinatorics. The method provides a short and elegant generalization of the fact that there are at most $d+1$ MUBs in $\Co^d$. It may also yield a proof that no complete system of MUBs exists in some composite dimensions -- a long standing open problem.
\end{abstract}

\maketitle

\bigskip

{\bf 2010 Mathematics Subject Classification.} Primary 15A30,
Secondary 05B10, 47L05

{\bf Keywords and phrases.} {\it  Mutually unbiased bases, complex
Hadamard matrices, difference sets, Delsarte's method}

\section{Introduction}

In this paper we introduce a novel approach to the problem of mutually unbiased bases in $\Co^d$. Surprisingly enough, the required Fourier analytic technique is borrowed from additive combinatorics -- a seemingly unrelated branch of mathematics.

\medskip

The paper is organized as follows. In the Introduction we recall some basic notions and results concerning mutually unbiased bases (MUBs). In Section \ref{sec2} we describe how the problem of MUBs fits into a general scheme in additive combinatorics -- a scheme we will call {\it Delsarte's method.} We then apply this method to prove Theorem \ref{thmgeneral}, an elegant generalization of the fact that there are at most $d+1$ MUBs in $\Co^d$. Finally, in Section \ref{sec3} we indicate the limitations of the method by introducing the notion of pseudo-MUBs, and discuss the possible existence of such in the case $d=6$.

\medskip

Recall that given an orthonormal basis $\ac=\{\eb_1,\ldots,\eb_d\}$ in $\Co^d$, a unit vector $\vb$ is called {\it unbiased} to $\ac$ if
$\abs{\scal{\vb,\eb_k}}=\dst\frac{1}{\sqrt{d}}$ for all $1\le k\le d$.
Two orthonormal bases in $\Co^d$,
$\ac=\{\eb_1,\ldots,\eb_d\}$ and $\bc=\{\fb_1,\ldots,\fb_d\}$ are
called \emph{unbiased} if for every $1\leq j,k\leq d$,
$\abs{\scal{\eb_j,\fb_k}}=\dst\frac{1}{\sqrt{d}}$. A collection
$\bc_0,\ldots\bc_m$ of orthonormal bases is said to be
\emph{(pairwise) mutually unbiased} if every two of them are
unbiased. What is the maximal number of pairwise mutually unbiased bases (MUBs) in $\Co^d$? This question originates from quantum information theory and has been investigated thoroughly over the past decades (see \cite{durt} for a recent comprehensive survey on MUBs). The following result is well-known (see e.g. \cite{BBRV,BBELTZ,WF}):

\begin{theorem}\label{thm1}
The number of mutually unbiased bases in $\Co^d$ cannot exceed $d+1$.
\end{theorem}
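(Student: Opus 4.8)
The plan is to pass from bases to the rank-one orthogonal projections onto their vectors, and to exploit orthogonality relations in the real Euclidean space $\mathcal{H}$ of Hermitian $d\times d$ matrices equipped with the Hilbert--Schmidt inner product $(A,B)\mapsto\tr(AB)$ (which is real-valued and positive definite on Hermitian matrices). Suppose $\bc_0,\dots,\bc_m$ are pairwise mutually unbiased; write $\bc_l=\{\eb^{(l)}_1,\dots,\eb^{(l)}_d\}$ and set $P^{(l)}_k=\eb^{(l)}_k(\eb^{(l)}_k)\adj$. Each $P^{(l)}_k$ is a self-adjoint rank-one projection with $\tr P^{(l)}_k=1$, so its traceless part $Q^{(l)}_k:=P^{(l)}_k-\frac1d I$ lies in the subspace $\mathcal{H}_0\subset\mathcal{H}$ of traceless Hermitian matrices, a real vector space of dimension $d^2-1$.

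Next I would record the elementary identity $\tr(P^{(l)}_kP^{(l')}_{k'})=\abs{\scal{\eb^{(l)}_k,\eb^{(l')}_{k'}}}^2$, which by the definition of mutual unbiasedness equals $1$ if $(l,k)=(l',k')$, equals $0$ if $l=l'$ but $k\ne k'$, and equals $1/d$ if $l\ne l'$. Subtracting the trace parts (and using $\tr P^{(l)}_k=1$) gives $\tr(Q^{(l)}_kQ^{(l)}_{k'})=\delta_{k,k'}-1/d$ within a single basis and $\tr(Q^{(l)}_kQ^{(l')}_{k'})=0$ whenever $l\ne l'$. Hence the subspaces $V_l:=\span\{Q^{(l)}_1,\dots,Q^{(l)}_d\}\subseteq\mathcal{H}_0$ are pairwise orthogonal.

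It then remains to determine $\dim V_l$. The Gram matrix of $Q^{(l)}_1,\dots,Q^{(l)}_d$ is $I_d-\frac1d J_d$ (here $J_d$ is the all-ones matrix), whose spectrum is $\{0\}$ with multiplicity $1$ and $\{1\}$ with multiplicity $d-1$; equivalently, if $\sum_k c_kQ^{(l)}_k=0$ then pairing with $Q^{(l)}_{k'}$ forces $c_{k'}=\frac1d\sum_kc_k$ for every $k'$, so all $c_k$ are equal and the relation is a scalar multiple of $\sum_kQ^{(l)}_k=0$ (which indeed holds, since $\sum_kP^{(l)}_k=I$). Therefore $\dim V_l=d-1$ for all $l$, and orthogonality yields
\[
(m+1)(d-1)=\sum_{l=0}^m\dim V_l\le\dim\mathcal{H}_0=d^2-1=(d-1)(d+1),
\]
whence $m+1\le d+1$.

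I do not expect a serious obstacle: the argument is self-contained once one observes that mutual unbiasedness is precisely the statement that the associated traceless projections coming from distinct bases are Hilbert--Schmidt orthogonal, while projections within one basis span a $(d-1)$-dimensional space. The only point requiring a line of care is this single-basis dimension count, i.e.\ verifying that $\sum_kQ^{(l)}_k=0$ is the unique linear relation among $Q^{(l)}_1,\dots,Q^{(l)}_d$, equivalently that the Gram matrix $I_d-\frac1d J_d$ has rank exactly $d-1$.
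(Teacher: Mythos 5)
Your argument is correct: the identity $\mathrm{Tr}(P^{(l)}_kP^{(l')}_{k'})=\abs{\scal{\eb^{(l)}_k,\eb^{(l')}_{k'}}}^2$, the resulting Hilbert--Schmidt orthogonality of the subspaces $V_l$ of traceless Hermitian matrices, the rank computation for the Gram matrix $I_d-\frac1d J_d$ (rank $d-1$, the only relation being $\sum_k Q^{(l)}_k=0$), and the count $(m+1)(d-1)\le d^2-1$ all check out, giving $m+1\le d+1$. It is, however, a genuinely different route from the one taken here: the paper does not prove Theorem \ref{thm1} by such a dimension count (the statement is quoted as known, e.g.\ from \cite{BBRV,WF}, where essentially your trace argument appears), but instead deduces it from the more general Theorem \ref{thmgeneral}, proved by Delsarte's Fourier-analytic method. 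There the columns of the associated complex Hadamard matrices are encoded as points of the torus $\TT^{d-1}$ whose pairwise differences lie in $ORT_d\cup UB_d$, and the explicit witness function $h$ of \eqref{h} bounds by $d^2$ the number of unit vectors unbiased to one fixed basis and pairwise orthogonal or unbiased; applied to the $md$ vectors of $\bc_1,\dots,\bc_m$ this gives $md\le d^2$, hence at most $d+1$ bases. Your approach buys a short, self-contained, purely linear-algebraic proof; the paper's approach buys generality (the $r\le d^2$ vectors need not be organized into orthonormal bases) and, more importantly, a linear-programming framework in which a better witness function in a composite dimension would rule out complete sets of MUBs there --- a possibility the dimension-counting argument does not offer.
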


We will generalize this fact in Theorem \ref{thmgeneral} below. The other important well-known result concerns prime-power dimensions (see e.g.
\cite{BBRV,Com0,Com1,Com2,Iv,KR,WF}).

\begin{theorem}\label{thm2}
A collection of $d+1$ mutually unbiased bases (called a {\it complete set} of MUBs) can be constructed
if the dimension $d$ is a prime or a prime-power.
\end{theorem}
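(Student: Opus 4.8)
The plan is to give an explicit construction. I would write $d = q = p^k$ and identify $\Co^d$ with the space of functions $\mathbb{F}_q \to \Co$, taking the standard basis $\ac = \{e_x\}_{x \in \mathbb{F}_q}$ as the first basis. Fix $\omega = e^{2\pi i/p}$ and let $\tr\colon \mathbb{F}_q \to \mathbb{F}_p$ denote the field trace. Assuming first that $p$ is odd, for each $a \in \mathbb{F}_q$ I would define a system $\bc_a = \{v^{(a)}_b\}_{b \in \mathbb{F}_q}$ by
\[ v^{(a)}_b(x) = \frac{1}{\sqrt q}\,\omega^{\tr(ax^2 + bx)}, \qquad x \in \mathbb{F}_q, \]
and claim that $\ac$ together with the $q$ bases $\bc_a$ forms a complete set of $d+1$ MUBs.

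The verification splits into three points. First, each $\bc_a$ is orthonormal, since $\scal{v^{(a)}_b, v^{(a)}_{b'}} = q^{-1}\sum_{x \in \mathbb{F}_q}\omega^{\tr((b-b')x)}$ equals $\delta_{b,b'}$ by orthogonality of the additive characters of $\mathbb{F}_q$. Second, $\bc_a$ is unbiased to $\ac$ because $\abs{\scal{v^{(a)}_b, e_x}} = q^{-1/2}$ is immediate from the definition. The substantive point is that $\bc_a$ and $\bc_{a'}$ are unbiased whenever $a \ne a'$: one computes
\[ \scal{v^{(a)}_b, v^{(a')}_{b'}} = \frac{1}{q}\sum_{x \in \mathbb{F}_q}\omega^{\tr\big((a-a')x^2 + (b-b')x\big)}, \]
and, since $a-a' \ne 0$ and $2$ is invertible in odd characteristic, completing the square turns this into a unimodular phase times a quadratic Gauss sum $\sum_{y \in \mathbb{F}_q}\omega^{\tr((a-a')y^2)}$, whose modulus is exactly $\sqrt q$. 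Hence $\abs{\scal{v^{(a)}_b, v^{(a')}_{b'}}} = q^{-1/2}$, and we obtain $q+1 = d+1$ mutually unbiased bases.

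The main obstacle is the even case $p = 2$, where this argument breaks down: in characteristic $2$ the map $x \mapsto x^2$ is additive, so the Gauss sum above degenerates and completing the square is not available. Here I would instead work over the Galois ring $GR(4,k)$ in place of $\mathbb{F}_{2^k}$ -- indexing the standard basis by its Teichm\"uller representatives and replacing $\tr$ by the $\ZZ/4\ZZ$-valued trace and $\omega$ by a primitive fourth root of unity -- so that the analogous exponential sums again have modulus $\sqrt q$; this covers $d = 2^k$. Alternatively, and uniformly for all prime powers, one can pass to the Weyl--Heisenberg group over $\mathbb{F}_q$: it has exactly $q+1$ maximal commuting subgroups, indexed by the $q+1$ Lagrangian lines of the symplectic plane $\mathbb{F}_q^2$, and their joint eigenbases are pairwise mutually unbiased. (The dimension $d=2$ can also be settled by exhibiting the three Pauli eigenbases directly.) Either route completes the proof.
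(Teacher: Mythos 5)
Your construction is correct, and it is essentially the standard one: the paper itself offers no proof of this theorem, quoting it as a known result with references (Ivanovic, Wootters--Fields, Bandyopadhyay--Boykin--Roychowdhury--Vatan, Klappenecker--R\"otteler, Combescure), and what you write is precisely the construction found there. For odd $q=p^k$ your argument is complete: orthonormality of each $\bc_a$ and unbiasedness with the standard basis are immediate, and the key step --- completing the square in $\mathrm{Tr}((a-a')x^2+(b-b')x)$ and invoking $\bigl|\sum_{y\in\mathbb{F}_q}\omega^{\mathrm{Tr}(cy^2)}\bigr|=\sqrt{q}$ for $c\ne 0$ --- is exactly the quadratic Gauss sum evaluation used in the literature. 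The one place where your write-up is a sketch rather than a proof is the even case $p=2$: you correctly identify that completing the square fails and you name the two standard repairs (the Galois ring $GR(4,k)$ construction of Klappenecker--R\"otteler, or the partition of the Weyl--Heisenberg/Pauli group over $\mathbb{F}_q$ into $q+1$ maximal commuting classes whose joint eigenbases are unbiased, as in BBRV), but you do not carry out the required exponential-sum estimate over the Galois ring, nor the verification that the commuting classes can be consistently phased in characteristic $2$ and that their eigenbases are pairwise unbiased. These are standard but not trivial computations, so a self-contained proof would need to include one of them; as a proposal, however, your route is the right one and matches the sources the paper cites.
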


However, if the dimension $d=p_1^{\alpha_1}\dots p_k^{\alpha_k}$ is composite then very little
is known except for the fact that there are at least $p_j^{\alpha_j}+1$ mutually
unbiased bases in $\C^d$ where $p_j^{\alpha_j}$ is the smallest of the prime-power divisors. In some specific square dimensions there is also a construction based on orthogonal Latin squares which yields more MUBs than $p_j^{\alpha_j}+1$ (see \cite{262}).
The following basic problem, however, remains open for all non-primepower dimensions:

\begin{problem}\label{MUB6problem}\
Does a complete set of $d+1$ mutually unbiased bases exist in $\Co^d$ if $d$ is not a prime-power?
\end{problem}

The answer is not known even for $d=6$, despite considerable efforts over the past few years
(\cite{BBELTZ,config,ujbrit,arxiv,Msz,Skinner,boykin}). The case $d=6$ is particularly tempting because it seems to be the simplest to handle with algebraic and numerical methods.  As of now, some {\it infinite families} of MUB-triplets
in $\C^6$ have been constructed (\cite{Za,arxiv}), but numerical evidence suggests that there exist no MUB-quartets \cite{config,ujbrit,numerical,Za}.

\medskip

It will also be important for us to recall that mutually unbiased bases
are naturally related to \emph{complex Hadamard matrices}.
Indeed, if the bases $\bc_0,\ldots,\bc_m$ are mutually
unbiased we may identify each
$\bc_l=\{\eb_1^{(l)},\ldots,\eb_d^{(l)}\}$ with the \emph{unitary}
matrix
$$
[H_l]_{j,k}=\ent{\scal{\eb_j^{(0)},\eb_k^{(l)}}_{1\leq k,j\leq
d}},
$$
{\it i.e.} the $k$-th column of $H_l$ consists of the
coordinates of the $k$-th vector of $\bc_l$ in the basis $\bc_0$.
(Throughout the paper the scalar product $\scal{.,.}$ of $\C^d$ is
conjugate-linear in the first variable and linear in the second.) With this convention,
$H_0=I$ the identity matrix and all other matrices are unitary
and have entries of modulus $1/\sqrt{d}$. Therefore, the matrices
$H'_l=\sqrt{d}H_l$ have all entries of modulus 1 and complex orthogonal
rows (and columns). Such matrices are called \emph{complex
Hadamard matrices}. It is thus clear that the existence of a family of
mutually unbiased bases $\bc_0,\ldots,\bc_m$ is equivalent to
the existence of a family of complex Hadamard matrices
$H'_1,\ldots, H'_m$ such that for all
 $1\leq j\not=k\leq
m$, $\frac{1}{\sqrt{d}}H^{'*}_jH'_k$ is again a complex Hadamard matrix. In such
a case we will say that these complex Hadamard matrices are {\it
mutually unbiased}.

\medskip

A complete classification of MUBs up to dimension 5 (see \cite{BWB}) is based on the classification of complex Hadamard matrices (see \cite{haagerup}). However, the classification of complex Hadamard matrices in dimension 6 is still out of reach despite recent
efforts \cite{BN,Msz,Skinner,star,karlsson}.

\medskip

In this paper we will use the above connection of MUBs to complex Hadamard matrices to apply a Fourier analytic approach, borrowed from additive combinatorics.

\section{Mutually unbiased bases, difference sets and Delsarte's method}\label{sec2}

In this section we describe a general scheme in additive combinatorics, and show how the problem of mutually unbiased bases fit into this scheme.

\medskip

Let $G$ be a compact Abelian group, and let a symmetric subset $A=-A\subset G$, $0\in A$ be given. We will call $A$ the 'forbidden' set. We would like  to determine the maximal cardinality of a set $B=\{b_1, \dots b_m\}\subset G$ such that all differences $b_j-b_k\in A^c\cup \{0\}$ (in other words,  all differences avoid the forbidden set $A$). Some well-known examples of this general scheme are present in coding theory (\cite{delsarte}), sphere-packings (\cite{cohnelkies}), and sets avoiding square differences in number theory (\cite{ruzsa}).

\medskip

We now describe a general method to tackle such problems. To the best of my knowledge it was first introduced by Delsarte (in a less general terminology) in connection with binary codes with prescribed Hamming distance. The method is also 'folklore' in the additive combinatorics community and I was introduced to it by Imre Z. Ruzsa (\cite{ruzsapersonal}).

\medskip

We are looking for a 'witness' function $h: G\to \R$ with the following properties.

\medskip

$\bullet$ $h$ is an even function, $h(x)=h(-x)$, such that the Fourier inversion formula holds for  $h$ (in particular, $h$ can be any finite linear combination of characters on $G$).

\medskip

$\bullet$ $h(x)\le 0$ for all $x\in A^c$

\medskip

$\bullet$ $\hat{h}(\gamma)\ge 0$ for all $\gamma\in \hat{G}$

\medskip

$\bullet$ $\hat{h}(0)=1$.

\begin{lemma}\label{delsartelemma} (Delsarte's method)\\
Given a function $h: G\to \R$ with the properties above, we can conclude that for any $B=\{b_1, \dots b_m\}\subset G$ such that $b_j-b_k\in A^c\cup \{0\}$ the cardinality of $B$ is bounded by $|B|\le h(0)$.
\end{lemma}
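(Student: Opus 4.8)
The plan is to estimate the quantity $S=\sum_{j,k=1}^m h(b_j-b_k)$ in two different ways, exploiting the two sign conditions on $h$ and its Fourier transform. This is the standard ``double counting via Fourier'' trick behind Delsarte's linear programming bound, so I expect no serious obstacle; the only care needed is to match the normalization conventions the author has chosen (Fourier inversion for $h$, $\hat h(0)=1$, and the target bound $|B|\le h(0)$ rather than $|B|\le\hat h(0)/\text{something}$).

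First I would expand each $h(b_j-b_k)$ using the Fourier inversion formula, writing $h(x)=\sum_{\gamma\in\hat G}\hat h(\gamma)\gamma(x)$ (a finite sum, by hypothesis). Substituting and interchanging the finite sums gives
\[
S=\sum_{\gamma\in\hat G}\hat h(\gamma)\sum_{j,k=1}^m \gamma(b_j-b_k)
=\sum_{\gamma\in\hat G}\hat h(\gamma)\,\Bigl|\sum_{j=1}^m \gamma(b_j)\Bigr|^2 ,
\]
using $\gamma(b_j-b_k)=\gamma(b_j)\overline{\gamma(b_k)}$. Since $\hat h(\gamma)\ge 0$ for every $\gamma$, every term on the right is nonnegative, and isolating the trivial character $\gamma=0$ (for which $\hat h(0)=1$ and $\bigl|\sum_j\gamma(b_j)\bigr|^2=m^2$) yields the lower bound $S\ge m^2$.

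Next I would bound $S$ from above using the ``forbidden set'' hypothesis on $B$. Split the double sum into diagonal terms $j=k$ and off-diagonal terms $j\neq k$: the diagonal contributes $\sum_{j}h(0)=m\,h(0)$, while for $j\neq k$ we have $b_j-b_k\in A^c$, so $h(b_j-b_k)\le 0$. Hence $S\le m\,h(0)$. Combining with the lower bound gives $m^2\le S\le m\,h(0)$, and dividing by $m>0$ gives $|B|=m\le h(0)$, as claimed.

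The one point that deserves a remark rather than an obstacle: one must check that $h(0)\ge\hat h(0)=1>0$ so that the inequality is not vacuous and the division by $m$ is legitimate in the intended direction — but $h(0)=\sum_\gamma\hat h(\gamma)\ge\hat h(0)=1$ since all $\hat h(\gamma)\ge 0$. I would also note in passing that the evenness of $h$ guarantees $\hat h$ is real-valued, which is what makes the sign condition $\hat h(\gamma)\ge 0$ meaningful, though evenness is not otherwise needed for this particular inequality.
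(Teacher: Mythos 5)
Your proposal is correct and follows essentially the same argument as the paper: both double-count the quantity $S=\sum_{j,k}h(b_j-b_k)=\sum_{\gamma}\hat h(\gamma)\bigl|\sum_j\gamma(b_j)\bigr|^2$, bounding it below by $m^2$ via the trivial character and above by $m\,h(0)$ via the sign condition on $A^c$. The only cosmetic difference is that you start from the physical-side sum and Fourier-expand, while the paper starts from the Fourier-side sum and inverts; the estimates are identical.
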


\begin{proof}
For any $\gamma\in \hat{G}$ define $\hat{B}(\gamma)=\sum_{j=1}^m \gamma (b_j)$. Now, evaluate
\begin{equation}\label{eq1}
S=\sum_{\gamma\in \hat{G}} |\hat{B}(\gamma)|^2 \hat{h}(\gamma).
\end{equation}

All terms are nonnegative, and the term corresponding to $\gamma=0$ (the trivial character, i.e. $\gamma(x)=1$ for all $x\in G$) gives $|\hat{B}(0)|^2\hat{h}(0)=|B|^2$. Therefore

\begin{equation}\label{eq2}
S\ge |B|^2.
\end{equation}

On the other hand, $|\hat{B}(\gamma)|^2= \sum_{j,k}\gamma(b_j-b_k)$, and therefore $S=\sum_{\gamma,j,k}\gamma(b_j-b_k)\hat{h}(\gamma)$. Summing up for fixed $j,k$ we get \\
$\sum_{\gamma}\gamma(b_j-b_k)\hat{h}(\gamma)=h(b_j-b_k)$ (the Fourier inversion formula), and therefore $S=\sum_{j,k} h(b_j-b_k)$. Notice that $j=k$ happens $|B|$-many times, and all the other terms (when $j\ne k$) are non-positive because $b_j-b_k\in A^c$, and $h$ is required to be non-positive there. Therefore
\begin{equation}\label{eq3}
S\le h(0)|B|.
\end{equation}
Comparing the two estimates \eqref{eq2}, \eqref{eq3} we obtain $|B|\le h(0)$.
\end{proof}

\medskip

How do mutually unbiased bases fit into this scheme? The answer is that they almost perfectly do, except for the fact that the underlying group is not Abelian. Indeed, let $G=U_{d\times d}$ the group of $d\times d$ unitary matrices, and let $H\subset U_{d\times d}$ denote the set of complex Hadamard matrices (rescaled by the factor $1/\sqrt{d})$) in $U_{d\times d}$. Let the 'forbidden' set $A$ be the complement of $H$. Of course, the group operations $+$ and $-$ in the Delsarte scheme are now replaced by matrix multiplication and inverse. Also, the role of zero element is taken by the identity matrix. Then, the maximal number of mutually unbiased bases in $\Co^d$ is exactly the maximal cardinality of a set $\{U_0,\ U_1, \dots U_m\} \subset G$ such that all 'differences' $U_j^\ast U_k$ ($0\le j,k\le m$) lie in the prescribed subset $A^c\cup \{I\}$.

\medskip

Unfortunately, we do not know how to generalize Delsarte's method to the case of non-commutative groups, in particular to $G=U_{d\times d}$. Nevertheless, we can still use Delsarte's scheme if we rephrase the problem appropriately, as follows.

\medskip

Assume that a family $H_1, \dots H_m$ of $m$ mutually unbiased complex Hadamard matrices exists.
Then all entries of all matrices are of modulus $1$, and the columns (and
thus the rows) within each matrix are complex orthogonal, and we have the unbiasedness
condition: for any two columns $\ub,\vb$ coming from different matrices
we have $|\langle \ub, \vb \rangle |=\sqrt{d}$. (Recall that we have re-normalized the matrices by a factor of $\sqrt{d}$.)

\medskip

After multiplying rows and columns by appropriate scalars if
necessary, we can assume that all coordinates of the first row and
column of $H_1$ are 1's, and all coordinates of the first row of
all other matrices are 1's (i.e. we assume that all appearing
columns have first coordinate 1, and the
first column in $H_1$ consists of 1's. This is standard and trivial normalization.) All the other
coordinates in the matrices are complex numbers of modulus 1, i.e.
they are of the form $e^{2\pi i \rho}$ with $\rho\in [-1/2,1/2)$. Therefore, we can associate to each column vector
$(1, e^{2\pi i \rho_1}, \dots ,e^{2\pi i \rho_{d-1}})$ the vector $(0, \rho_1, \dots, \rho_{d-1})\in \TT^d$, the real $d$-dimensional torus, $\TT^d=[-1/2,1/2)^d$. Also, note that the first coordinate always automatically becomes 0, because each column starts with coordinate 1. Therefore we make the more useful association that a column ${\cb}=(1, e^{2\pi i \rho_1}, \dots ,e^{2\pi i \rho_{d-1}})$ is represented by ${\ub}=(\rho_1, \dots \rho_{d-1})\in \TT^{d-1}$, the $d-1$-dimensional torus. There are altogether $md$ column vectors in the Hadamard matrices $H_1, \dots H_m$, and we will denote the associated vectors in $\TT^{d-1}$ by $\ub_1, \dots \ub_{md}$ (we will see that in this approach it is not really relevant to indicate which vector comes from which basis. But let us agree for convenience that $\ub_1=(0, \dots, 0)$, corresponding to the first column of $H_1$.)

\medskip

Two columns ${\cb_1}=(1, e^{2\pi i \rho_1}, \dots ,e^{2\pi i \rho_{d-1}})$ and ${\cb_2}=(1, e^{2\pi i \mu_1}, \dots ,e^{2\pi i \mu_{d-1}})$ are orthogonal if and only if $1+\sum_{j=1}^{d-1} e^{2\pi i (-\rho_j+\mu_j)}=0$, and they are unbiased if and only if $|1+\sum_{j=1}^{d-1} e^{2\pi i (-\rho_j+\mu_j)}|=\sqrt{d}$. Therefore it is natural to introduce the following definitions.

\begin{definition}
Let $ORT_d$ denote the set of vectors $(\alpha_1, \dots \alpha_{d-1})\in \TT^{d-1}$, in the $d-1$-dimensional torus, such that $1+\sum_{j=1}^{d-1} e^{2\pi i \alpha_j}=0$. Also, let $UB_d$ denote the set of vectors $(\alpha_1, \dots \alpha_{d-1})\in \TT^{d-1}$, such that $|1+\sum_{j=1}^{d-1} e^{2\pi i \alpha_j}|=\sqrt{d}$.
Let us also define the 'forbidden' set $A_d=(ORT_d \cup UB_d)^c$.
\end{definition}

\medskip

We conclude that the vectors $\ub_1, \dots \ub_{md}$ satisfy that the difference of any two of them (the difference being taken $mod \ 1$ in each coordinate, i.e. we take the difference in the group $\TT^{d-1}$) lies in $A_d^c\cup \{0\}$. Therefore, we have arrived exactly to the scheme of Lemma \ref{delsartelemma}.

\medskip

As a preliminary remark we note that the dual group of $G=\TT^{d-1}$ is $\hat{G}=\ZZ^{d-1}$. And the action of a character $\gamma\in \ZZ^{d-1}$ on a point $x\in\TT^{d-1}$ is given as $\gamma(x)=e^{2\pi i \langle \gamma, x\rangle}$. In particular, $\gamma=0$ is the trivial character (constant 1). The Fourier transform of a function $f: G\to \Co$ is a function $\hat{f}: \hat{G}\to \Co$ given as $\hat{f}(\gamma)= \int_{x\in G}f(x)\gamma(x)dx$.

\medskip

Let us see whether we can find a good 'witness' function in this situation. At first sight things do not look promising because we have no understanding of the geometry of the sets $ORT_d$ and $UB_d$ inside the torus $\TT^{d-1}$. However, it turns out such geometric understanding is not necessarily required and we easily prove the following generalization of Theorem \ref{thm1}.

\begin{theorem}\label{thmgeneral}
Let $\ac$ be an orthonormal basis in $\Co^d$, and let $B=\{\cb_1, \dots \cb_r\}$ consist of unit vectors which are all unbiased to $\ac$. Assume that for all $1\le j\ne k \le r$ the vectors $\cb_j$ and $\cb_k$ are either orthogonal or unbiased to each other, i.e. either $\langle \cb_j, \cb_k \rangle =0$ or
$|\langle \cb_j, \cb_k \rangle |=1/\sqrt{d}$. Then $r\le d^2$.
\end{theorem}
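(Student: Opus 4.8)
The plan is to apply Delsarte's method (Lemma~\ref{delsartelemma}) in the group $G=\TT^{d-1}$ with forbidden set $A_d=(ORT_d\cup UB_d)^c$, exactly as set up in the preceding discussion. Here the relevant point set is $B'=\{\ub_1,\dots,\ub_r,\ub_{r+1}\}$, where $\ub_1,\dots,\ub_r\in\TT^{d-1}$ are the vectors associated to $\cb_1,\dots,\cb_r$ after the standard normalization (first coordinate $1$), and $\ub_{r+1}=(0,\dots,0)$ is adjoined to account for a column of $\ac$ itself --- wait, more carefully: since every $\cb_j$ is unbiased to $\ac$, and any two distinct $\cb_j,\cb_k$ are orthogonal or unbiased, the pairwise ``differences'' $\ub_j-\ub_k$ all lie in $A_d^c\cup\{0\}$. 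So it suffices to produce a witness function $h$ on $\TT^{d-1}$ with $\widehat h\ge 0$, $\widehat h(0)=1$, $h\le 0$ off $ORT_d\cup UB_d\cup\{0\}$, and $h(0)=d^2$; then Lemma~\ref{delsartelemma} gives $r\le d^2$.

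The key idea is that, although we do not understand the geometry of $ORT_d$ and $UB_d$ inside the torus, these sets are exactly the level sets of the function $x\mapsto\bigl|1+\sum_{j=1}^{d-1}e^{2\pi i x_j}\bigr|^2$, which takes the value $0$ on $ORT_d$ and the value $d$ on $UB_d$. Write $P(x)=\bigl|1+\sum_{j=1}^{d-1}e^{2\pi i x_j}\bigr|^2=d+\sum_{j\ne k}e^{2\pi i(x_j-x_k)}+\sum_j(e^{2\pi i x_j}+e^{-2\pi i x_j})$ where indices range suitably including a ``$0$th'' coordinate fixed to $0$; this is a trigonometric polynomial on $\TT^{d-1}$ with nonnegative coefficients (being a squared modulus), hence $\widehat P\ge 0$. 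The natural witness is then
\begin{equation}\label{eq:witness}
h(x)=c\,P(x)\bigl(P(x)-d\bigr),
\end{equation}
with $c>0$ chosen so that $\widehat h(0)=1$. Indeed $h$ vanishes on $ORT_d$ (where $P=0$) and on $UB_d$ (where $P=d$), and $h\le 0$ precisely on $\{0\le P\le d\}$, which contains $A_d^c\cup\{0\}$ --- so the sign condition $h\le 0$ on $A_d^c$ holds. The remaining conditions to check are: that $\widehat h\ge 0$, and that the resulting value $h(0)$ equals $d^2$.

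For $\widehat h\ge 0$: expand $h=c(P^2-dP)$. Since $P$ has nonnegative Fourier coefficients, so does $P^2$; the obstacle is the subtracted term $dP$. One computes $h(0)=P(0)(P(0)-d)=d^2(d^2-d)=d^3(d-1)$ at $x=0$ --- hmm, $P(0)=|1+(d-1)|^2=d^2$, so $h(0)=c\cdot d^2(d^2-d)=c\,d^3(d-1)$, and the normalization $\widehat h(0)=1$ pins down $c$; one must then verify $h(0)=d^2$, i.e. that $\widehat h(0)=c\cdot(\text{constant term of }P^2-dP)$ yields $c\,d^3(d-1)=d^2$. The content of the proof is the identity computation showing $\widehat h(0)=\bigl(\text{const. coeff. of }P^2-dP\bigr)/\bigl(\text{const. coeff. of }P^2-dP\bigr)\cdot(\dots)$, i.e. that the ratio $h(0)/\widehat h(0)$ equals exactly $d^2$; this reduces to counting how the diagonal and off-diagonal monomials of $P$ contribute, using $P(0)=d^2$, $\widehat P(0)=d$ (the constant term of the squared modulus is $\sum|\text{coeff}|^2$ of the underlying polynomial $1+\sum e^{2\pi i x_j}$, which has $d$ ones), and analogous bookkeeping for $\widehat{P^2}(0)$.

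\textbf{Main obstacle.} I expect the crux to be verifying $\widehat h\ge 0$ with this particular $h$. Nonnegativity of $\widehat{P^2}$ is automatic, but subtracting $d\,P$ could in principle create negative coefficients. The resolution I anticipate is a more clever algebraic observation: $P-d=\sum_{j\ne k}e^{2\pi i(x_j-x_k)}+\sum_j(e^{2\pi i x_j}+e^{-2\pi i x_j})$ where now the index $0$ (fixed coordinate) participates, and one should recognize $P(P-d)$ --- or a slight variant --- as itself a sum of squared moduli of trigonometric polynomials, or as $\bigl|\,\sum_{\text{cols}}e^{2\pi i\langle\cdot\rangle}\,\bigr|$-type expressions summed over a cleverly chosen family, making $\widehat h\ge0$ manifest. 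Getting the exact form of $h$ and the constant $c$ right so that simultaneously $\widehat h\ge0$, $\widehat h(0)=1$, and $h(0)=d^2$ is the real work; the Delsarte wrapper (Lemma~\ref{delsartelemma}) then finishes immediately.
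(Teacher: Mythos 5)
You have reconstructed exactly the paper's argument up to its final verification: the reduction to Delsarte's method on $\TT^{d-1}$ and the witness $h=c\,P(P-d)$ with $P(x)=\bigl|1+\sum_{j=1}^{d-1}e^{2\pi i x_j}\bigr|^2$ are precisely the paper's choices (with $c=\tfrac{1}{d(d-1)}$). But the step you flag as the unresolved ``main obstacle'' --- proving $\widehat h\ge 0$ despite the subtracted term $dP$ --- is the heart of the verification, and you leave it open while predicting it requires a cleverer sum-of-squares decomposition. No extra idea is needed, and your own displayed expansion already contains the proof: writing $x_0=0$, one has $P(x)=\sum_{j,k=0}^{d-1}e^{2\pi i(x_j-x_k)}$, whose constant term (the $j=k$ terms) is exactly $d$, so $P-d=\sum_{j\ne k}e^{2\pi i(x_j-x_k)}$ is a sum of non-trivial characters each with coefficient $+1$. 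Hence both factors $P$ and $P-d$ have nonnegative Fourier coefficients, and so does their product $h$: subtracting $d$ merely deletes the constant term and cannot create negative coefficients. This is why the paper can dismiss the point with ``the coefficients after expanding the brackets are clearly nonnegative.'' The normalization you also leave dangling is a short count: since $P(P-d)=(P-d)^2+d(P-d)$ and $P-d$ has no constant term, the constant term of $h/c$ comes only from $(P-d)^2$, where each character $e^{2\pi i(x_j-x_k)}$ pairs only with its inverse (the term indexed $(q,s)=(k,j)$), giving $d(d-1)$; thus $c=1/(d(d-1))$ yields $\widehat h(0)=1$ and $h(0)=\frac{d^2(d^2-d)}{d(d-1)}=d^2$, and Lemma \ref{delsartelemma} gives $r\le d^2$.

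Two smaller points to tidy. First, the sign condition in Lemma \ref{delsartelemma} is $h\le 0$ \emph{on} $A_d^c=ORT_d\cup UB_d$, i.e.\ on the set where the nonzero differences $\ub_j-\ub_k$ are allowed to lie; it holds here because $h$ vanishes there. Your phrase ``$h\le 0$ off $ORT_d\cup UB_d\cup\{0\}$'' reverses this (and is false for the chosen $h$, which is positive near $0$), and the claim that $\{0\le P\le d\}$ contains the zero element is also false since $P(0)=d^2$ --- though your later sentence states the correct condition, so this is a slip rather than a flaw in the plan. Second, the adjoined point $\ub_{r+1}=0$ is neither needed nor justified: the column it represents (all coordinates equal) need not be orthogonal or unbiased to the $\cb_j$, so it cannot be added to the point set; the paper applies the lemma to $\ub_1,\dots,\ub_r$ alone, noting only that for $j\ne k$ the hypothesis forces $\ub_j-\ub_k\in ORT_d\cup UB_d$.
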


\begin{proof}
As we saw in the discussion above, the vectors $\ub_1, \dots \ub_{r}\in \TT^{d-1}$ (associated to $\sqrt{d}\cb_1, \dots \sqrt{d}\cb_r$) satisfy $\ub_j-\ub_k\in A_d^c\cup \{0\}$ for all $1\le j,k\le r$. Therefore Lemma \ref{delsartelemma} can be applied.

\medskip

Define the 'witness' function $h:\TT^{d-1} \to \R$ as follows:
\begin{equation}\label{h}
h(x_1, \dots x_{d-1})=\frac{1}{(d-1)d}\left|1+\sum_{j=1}^{d-1} e^{2\pi i x_j}\right|^2\left (|1+\sum_{j=1}^{d-1} e^{2\pi i x_j}|^2-d \right ).
\end{equation}

It is trivial to check that $h$ satisfies all requirements. Indeed, $h$ is an even function which vanishes on $ORT_d \cup UB_d$.  The Fourier coefficients of $h$ are simply the coefficients of the exponential terms after expanding the brackets, and these are clearly nonnegative.  Also $\hat{h}(0)=1$ because $\hat{h}(0)$ is the integral of $h$, which is just the constant term.  Also, $h(0, \dots 0)=d^2$, so that we conclude from Lemma \ref{delsartelemma} that $|B|\le d^2$.
\end{proof}

\medskip

\begin{remark}\rm
As shown by Theorem \ref{thm2} the result of Theorem \ref{thmgeneral} is sharp if $d$ is a prime-power. However, if $d$ is not a prime-power, then it could be possible to find a better witness function than above. The function $h$ above uses simply the {\it definition} of the sets $ORT_d$ and $UB_d$. In principle, it could be possible to find some structural properties of these sets in dimension 6 (or any other composite dimension), in order to construct a better witness function and get a sharper bound on $r$. Any upper bound $r<d^2$ would mean that a complete set of MUBs does not exist in dimension $d$. We have not been able to make such improvement for any $d$ so far. \hfill $\square$
\end{remark}

Another observation is that if $r=d^2$ in Theorem \ref{thmgeneral} then both estimates \eqref{eq2}, \eqref{eq3} must hold with equality.
On the one hand, it is trivial that \eqref{eq3} automatically becomes an equality for the $h$ above (because $h$ is zero on $ORT_d$ and $UB_d$). On the other hand, inequality \eqref{eq2} becomes an equality {\it if only if} $|\hat{B}(\gamma)|^2 \hat{h}(\gamma)=0$ for all $\gamma\ne 0$. These are non-trivial conditions and we obtain the following corollary, which is a generalization of Theorem 8 in \cite{belovs}.

\begin{corollary}
Let $\ac$ be an orthonormal basis in $\Co^d$, and let $B=\{\cb_1, \dots \cb_{d^2}\}$ consist of unit vectors which are all unbiased to $\ac$. Assume that for all $1\le j\ne k \le d^2$ the vectors $\cb_j$ and $\cb_k$ are either orthogonal or unbiased to each other. Write $B$ as a $d\times d^2$ matrix, the columns of which are the vectors $\cb_j$, $j=1, \dots d^2$. Let $\rb_1,\dots \rb_d$ denote the rows of the matrix $B$, and let $\rb_{j/k}=\rb_j/\rb_k$ denote the coordinate-wise quotient of the rows. Then the vectors $\rb_{j/k}$ ($1\le j\ne k\le d$) are orthogonal to each other in $\Co^{d^2}$, and they are all orthogonal to the vector $(1,1,\dots 1)\in \Co^{d^2}$.
\end{corollary}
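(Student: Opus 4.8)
The plan is to exploit the equality case of the Delsarte bound, which the paragraph preceding the statement already records: since $h(0,\dots,0)=d^2$, the hypothesis $r=d^2$ forces equality in \eqref{eq2} and \eqref{eq3}, and in particular $|\hat B(\gamma)|^2\hat h(\gamma)=0$ for every $\gamma\neq 0$ in $\hat G=\ZZ^{d-1}$, i.e.\ $\hat h(\gamma)>0$ implies $\hat B(\gamma)=0$. So it suffices to (i) express each inner product $\langle\rb_{j/k},\rb_{j'/k'}\rangle$ and $\langle\rb_{j/k},(1,\dots,1)\rangle$ as a value $\hat B(\gamma)$ of the character sum $\hat B$ at an explicit nonzero character $\gamma$, and (ii) check that $\hat h>0$ at all these $\gamma$. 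I expect (ii) — really, locating the positivity set of $\hat h$ — to be the only step with genuine content. (Note that all entries of the $\cb_j$ have modulus $1/\sqrt d$, so the quotients $\rb_{j/k}$ are well defined.)

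For step (i) I would first normalize. Multiplying a column $\cb_l$ by a unimodular scalar multiplies every row of the matrix entrywise by the same scalars, so it leaves each quotient $\rb_{j/k}$ unchanged; hence we may assume $\sqrt d\,\cb_l=(1,e^{2\pi i\rho^{(l)}_1},\dots,e^{2\pi i\rho^{(l)}_{d-1}})$ and use the associated points $\ub_l=(\rho^{(l)}_1,\dots,\rho^{(l)}_{d-1})\in\TT^{d-1}$ from Section~\ref{sec2}, together with the harmless convention $\rho^{(l)}_0:=0$. Writing $\epsilon_1,\dots,\epsilon_{d-1}$ for the standard basis of $\ZZ^{d-1}$ and $\epsilon_0:=0$ (the zero vector, reflecting that the first coordinate of every column is $1$), one has $(\rb_{j/k})_l=e^{2\pi i(\rho^{(l)}_{j-1}-\rho^{(l)}_{k-1})}$, and a direct computation from the definition $\hat B(\gamma)=\sum_{l=1}^{d^2}e^{2\pi i\langle\gamma,\ub_l\rangle}$ gives
\[
\langle\rb_{j/k},\rb_{j'/k'}\rangle=\hat B\big((\epsilon_{k-1}-\epsilon_{j-1})+(\epsilon_{j'-1}-\epsilon_{k'-1})\big),\qquad \langle\rb_{j/k},(1,\dots,1)\rangle=\hat B(\epsilon_{k-1}-\epsilon_{j-1}).
\]
A short case analysis using $j\neq k$, $j'\neq k'$ and the identification $\epsilon_0=0$ shows that $\epsilon_{k-1}-\epsilon_{j-1}\neq 0$ always, and that the argument of $\hat B$ in the first identity is nonzero exactly when $(j,k)\neq(j',k')$.

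For step (ii) I would set $F(x)=1+\sum_{j=1}^{d-1}e^{2\pi ix_j}$ and, with $x_0:=0$, let $g(x):=|F(x)|^2-d=\sum_{0\le m\neq n\le d-1}e^{2\pi i(x_m-x_n)}$, so that $\hat g=\mathbf{1}_D$, where $D:=\{\epsilon_m-\epsilon_n:\ 0\le m\neq n\le d-1\}$ is the set of nonzero differences of $\{0,\epsilon_1,\dots,\epsilon_{d-1}\}$, each attained exactly once. Rewriting the witness function \eqref{h} as
\[
(d-1)d\,h=|F|^2\big(|F|^2-d\big)=\big(|F|^2-d\big)^2+d\big(|F|^2-d\big)=g^2+dg,
\]
we obtain $(d-1)d\,\hat h=\mathbf{1}_D*\mathbf{1}_D+d\,\mathbf{1}_D$, a sum of two nonnegative functions on $\ZZ^{d-1}$; hence $\hat h(\gamma)>0$ whenever $\gamma\in D$ or $\gamma\in D+D$. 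Since $\epsilon_{k-1}-\epsilon_{j-1}\in D$ and $(\epsilon_{k-1}-\epsilon_{j-1})+(\epsilon_{j'-1}-\epsilon_{k'-1})\in D+D$, both families of characters produced in step (i) lie in the support of $\hat h$; combined with (i) and the equality case observation this gives $\langle\rb_{j/k},(1,\dots,1)\rangle=0$ for all $j\neq k$ and $\langle\rb_{j/k},\rb_{j'/k'}\rangle=0$ for all $(j,k)\neq(j',k')$, which is exactly the assertion.

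The index bookkeeping and the computation of $\hat h$ are routine; the two points needing care are the ``contracted'' coordinate $\rho_0=x_0=0$ inherited from the first-row normalization — this is where one must verify that the relevant characters do not vanish — and the algebraic rearrangement $(d-1)d\,h=g^2+dg$, which is what makes $\hat h$ transparently strictly positive on $D\cup(D+D)$.
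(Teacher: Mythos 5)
Your proposal is correct and follows essentially the same route as the paper: exploit the equality case of Lemma \ref{delsartelemma}, identify the inner products $\langle \rb_{j/k},\rb_{j'/k'}\rangle$ and $\langle \rb_{j/k},(1,\dots,1)\rangle$ with values $\hat{B}(\gamma)$ at nonzero frequencies, and check that these frequencies lie in the support of $\hat{h}$. Your rewriting $(d-1)d\,h=g^2+dg$ with $\hat g=\mathbf{1}_D$ is just a cleaner, more explicit verification of the positivity of $\hat h$ on $D\cup(D+D)$, which the paper asserts rather tersely.
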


\begin{proof}
This is a direct consequence of the proof of Lemma \ref{delsartelemma}. Indeed, for \eqref{eq2} to be an equality  $\hat{B}(\gamma)$ must be zero whenever $\hat{h}(\gamma)\ne 0$ and $\gamma\ne 0$. Looking at the definition of $h$, $\hat{h}(\gamma)\ne 0$ happens exactly when $\gamma (x_1, \dots x_{d-1})=e^{2\pi i ((x_j-x_k)+(x_q-x_s))}$, for any quadruple $0\le j,k,q,s\le d-1$, where we use the convention that $x_0=0$. Using the definition of $\hat{B}(\gamma)$ we obtain that $\hat{B}(\gamma)=0$ means exactly that $\rb_{j/k}$ and $\rb_{s/q}$ are orthogonal to each other.

\end{proof}

\begin{remark}\rm
There are altogether $d(d-1)$ vectors of the form $\rb_{j/k}$, and with the addition of $(1,1,\dots 1)$ we get a system of $d(d-1)+1$ orthogonal vectors in $\Co^{d^2}$. It is not at all obvious whether in each dimension $d^2$ there exists a set of vectors $R=\{\rb_1, \dots \rb_d\}$, with all coordinates having absolute value 1, such that they satisfy these orthogonality constraints. Let us remark that if modulo $d^2$ a $d$-element Sidon-set exists  (i.e. a set where the non-zero differences are all distinct), then the corresponding rows of the Fourier matrix $F_{d^2}$ form an appropriate system $R$. For example, for $d=6$ the set $\{0, 1, 3, 8, 23, 27\}$ is a Sidon-set modulo 36, so the rows $R=\{\fb_0, \fb_1, \fb_3, \fb_8, \fb_{23}, \fb_{27}\}$ of the Fourier matrix $F_{36}$ satisfy all orthogonality constraints (but $R$ is not a concatenation of orthonormal bases, so it does not form a complete set of MUBs). One could try to find a composite $d$ such that a $d$-element Sidon set modulo $d^2$ does not exist, and then show that an appropriate set of vectors $R$ also cannot exist. This would prove that a complete set of MUBs does not exist in dimension $d$.  \hfill $\square$
\end{remark}

\section{Linear duality and pseudo-MUBs}\label{sec3}

We can view the set $B$ in Lemma \ref{delsartelemma} as a 0-1-valued function on $G$. Also, observe that $B$ does not directly enter the proof, but instead the function $|\hat{B}(\gamma)|^2=\widehat{B-B}(\gamma)$ is essential. For any $y\in \TT^{d-1}$ let $f(y)$ denote the number of ways of writing $y$ as a difference of two elements of $B$. Then, for any $\gamma\in \hat{G}=\ZZ^{d-1}$ we have $|\hat{B}(\gamma)|^2= \sum_{j,k}\gamma(b_j-b_k)=\sum_{y}f(y)e^{2\pi i\langle \gamma,y\rangle}$. Therefore, $f$ has the following essential properties.

\medskip

$\bullet$ the finite exponential sum $\sum_{y}f(y) e^{2\pi i\langle \gamma,y\rangle}$ is nonnegative for all $\gamma\in\ZZ^{d-1}$, and the exponents $y\in A^c\cup\{0\}=ORT_d\cup UB_d\cup \{0\}$. We can view it as the Fourier transform of the function $f : \TT^{d-1}\to \R_+$.

\medskip

$\bullet$ the coefficients $f(y)$ are nonnegative integers.

\medskip

$\bullet$ the sum of the coefficients $\sum_y f(y)= |B|^2$.

\medskip

$\bullet$ $f(0)=|B|$.

\medskip

Given any such function $f$ we can repeat the proof of Lemma \ref{delsartelemma} with the 'witness' function $h$ defined in \eqref{h}, and conclude that
\begin{equation}
\frac{\sum_y f(y)}{f(0)}\le \frac{h(0)}{\hat{h}(0)}=d^2.
\end{equation}

This motivates the following definition:

\begin{definition}We will call a function $f: \TT^{d-1}\to \R_+$ a complete-pseudo-MUB-system in dimension $d$ (or pseudo-MUB-$d$ in short) if it satisfies the following conditions:

\medskip

$\bullet$ $f$ is nonnegative, the support of $f$ is finite and is contained in $ORT_d\cup UB_d\cup \{0\}$

\medskip

$\bullet$ the finite exponential sum $\hat{f}(\gamma)=\sum_{y}f(y) e^{2\pi i\langle \gamma,y\rangle}$ is nonnegative for all $\gamma\in\ZZ^{d-1}$.

\medskip

$\bullet$ the sum of the coefficients $\sum_y f(y)= d^4$.

\medskip

$\bullet$ $f(0)=d^2$.
\end{definition}

Notice that $f$ is not required to be integer valued. As discussed above, a complete set of $d+1$ MUBs always gives rise to a complete-pseudo-MUB-system. Indeed, let $B\subset \TT^{d-1}$ denote the $d^2$ columns of the corresponding $d$ mutually unbiased Hadamards, and let $f(y)=(B-B)(y)$, meaning the number of ways $y$ can be written as a difference $b_j-b_k$. Then $f$ is a complete-pseudo-MUB-system. The converse is not necessarily true: a complete-pseudo-MUB-system does not directly imply the existence of a complete set of MUBs.

\medskip

In any dimension $d$, if we find a pseudo-MUB-$d$ function $f$ then it could serve as a 'dual-witness' testifying that our function $h$ in equation \eqref{h} is best possible, and it would mean that the Delsarte method {\it alone} cannot prove the non-existence of $d+1$ MUBs in dimension $d$. We emphasize that it would {\it not} mean that a complete system of $d+1$ MUBs exists. It would only mean that a complete-pseudo-MUB-system exists.

\medskip

We remark that there is a linear duality here. {\it Either} a pseudo-MUB-$d$ exists, {\it or} a better witness function $h:\TT^{d-1}\to \R$ exists, proving that $r<d^2$ in Theorem \ref{thmgeneral}. There is no third option! In the latter case we could conclude that no complete system of $d+1$ MUBs exists in dimension $d$. In the former case we would have an interesting pseudo-MUB-$d$ in our hand, which could possibly lead later to the discovery of a proper complete set of MUBs.

\medskip

Let us examine the situation in dimension $d=6$.

\medskip

One natural idea is to fix some $m$, and look for a pseudo-MUB-6 function such that its support contains vectors only whose coordinates are $m$th roots of unity. The reason is that all known complete sets of MUBs consist of such vectors. It is also convenient because such vectors belonging to $ORT_6 $ and $UB_6$ can easily be listed by a computer code. Furthermore, the restriction $\hat{f}(\gamma)\ge 0$ needs to be checked only as $\gamma$ ranges over the cube $[0,m-1]^5$, due to periodicity. Finally, we fix $f(0)=1$ (which is a somewhat more convenient normalization than $f(0)=d^2$ in the definition), and maximize $M={\sum_y f(y)}$ by linear programming (a pseudo-MUB-6 would have the value 36 here).   We have tried this and the results are the following:

\medskip

$\bullet$ $m=12$, $M=17.5$

\medskip

$\bullet$ $m=8$, $M=21.6$

\medskip

$\bullet$ $m=16$, $M=21.6$

\medskip

Larger values of $m$ are out of our computational power. As one can see, these results are inconclusive. We could not find a pseudo-MUB-6 but we could not find a better 'witness' function $h(x)$ either in dimension 6. By linear duality one of them must exit, and it would be interesting to see which one.

\end{document}